\begin{document}




\title{Scalable Entity Resolution Using Probabilistic Signatures on
       Parallel Databases}




\numberofauthors{3} 

\author{
\alignauthor Yuhang Zhang
  \titlenote{Corresponding author: yuhang.zhang@austrac.gov.au \\
             \copyright \, Commonwealth of Australia}\\
       \affaddr{\small AUSTRAC} 
\alignauthor Kee Siong Ng\\
       \affaddr{\small AUSTRAC / ANU} 
\alignauthor Michael Walker \\
        \affaddr{\small AUSTRAC} 
\and \alignauthor Pauline Chou\\
     \affaddr{\small AUSTRAC}
  \alignauthor Tania Churchill \\
  \affaddr{\small AUSTRAC} 
\alignauthor Peter Christen\\
  \affaddr{\small Australian National University} 
}


\maketitle

\begin{abstract}
Accurate and efficient entity resolution is an open challenge of
particular relevance to intelligence organisations that collect large datasets from disparate sources with differing levels of quality and
standard.
Starting from a first-principles formulation of entity resolution,  
this paper presents a novel Entity Resolution algorithm that introduces a data-driven blocking and record linkage technique based on the probabilistic identification of entity signatures in data.
The scalability and accuracy of the proposed algorithm are evaluated using benchmark datasets and shown to achieve state-of-the-art results.
The proposed algorithm can be implemented simply on modern parallel databases, which allows it to be deployed with relative ease in large industrial applications. 
\end{abstract}


\section{Introduction}
  
Entity resolution (ER) is the process of identifying records that refer
to the same real-world entity. 
Accurate and efficient ER is needed in various data-intensive
applications, including but not limited to health studies, fraud
detection, and national censuses~\cite{Christen:2012:DMC:2344108}.
More specifically, ER plays a pivotal role in the context of
Australia's whole-of-government approach to tackle our most pressing
social issues -- including terrorism and welfare fraud -- by combining
and analysing datasets from multiple government agencies. 

Two typical challenges in entity resolution are imperfect
data quality and large data size.
Common data quality issues that can introduce ambiguity in the ER process include:
\begin{itemize}\itemsep1.5mm\parskip0mm
  \item \textbf{Incompleteness}: Records with incomplete attribute
        values or even missing attribute values.
  \item \textbf{Incompatible formats}: The formats of names,
        addresses, dates, numbers, etc., can be different between
        countries, regions, and languages.
  \item \textbf{Errors}: Records containing wrong information due to
        either user or system errors, or deliberate attempts at
        obfuscation are widely seen in databases.
  \item \textbf{Timeliness}: Records have become outdated due to poor maintenance or
        data refresh practices, such as people changing their name or
        address.
\end{itemize}

In databases containing upwards of tens to hundreds of millions records, ER can also be challenging because exhaustively comparing records in a
pairwise manner is computationally infeasible~\cite{Chr12b}. 
In fact, any algorithm with time complexity worse than
linear is prohibitive on large databases.

In this paper, we present a simple and scalable ER algorithm that addresses the challenges of performing ER on poor quality and high volume data. 
The key ideas behind our proposed approach are described next. 


\subsubsection*{Using Redundancy to Overcome Data Quality Issues}
The most common way to tackle data quality issues is to standardise
and cleanse raw data before the linking operation
\cite{Christen:2012:DMC:2344108}. Standardisation and cleansing are
umbrella terms covering operations which can fill in incomplete data,
unify inconsistent formats, and remove errors in data.

The problem with standardisation and cleansing is that it is in
itself a challenging problem. For example, \emph{01/02/2000} can be
parsed as either \emph{1st of Feb 2000} or \emph{2nd of Jan 2000}.
\emph{St} can mean either \emph{Street} or \emph{Saint} in addresses.
If a mistake is made during standardisation and cleansing, it is
usually difficult to recover from it to perform linkage correctly.

Instead of standardising and cleansing data into canonical forms, we
rely on redundancy in data to overcome data quality issues. We say a
record contains redundancy if one of its subrecords can uniquely
identify the same entity. For example, if there is only one
\emph{John Smith} living in \emph{Elizabeth Street}, then \emph{John
Smith, 45 Elizabeth Street} as a record of a person contains redundancy,
because specifying street number \emph{45} is not really necessary.

Redundancy exists widely in data. Not every country has a city named
\emph{Canberra}. Not every bank has a branch in \emph{Bungendore}. As
an extreme case, three numbers \emph{23 24 5600} can be sufficient
to specify an address globally, if there is only one address in the
world containing these three numbers at the same time. In this case,
we do not even need to know if \emph{23} is a unit number or the first
part of a street number. 
Such seemingly extreme examples are actually quite common in practice. 
For example, $1,374,998$ of the $13.9$ million Australian addresses in the Open Address
\cite{openaddress} database can be uniquely identified by just three numbers in them.

Redundancy simplifies ER. If two records share a common subrecord that
can be used to uniquely identify an entity, then these two
records can be linked no matter what data quality issues they each have. 
We call such a subrecord a \emph{signature} of its entity. 
Probabilistic identification of signatures in data and linking records using such probabilistic signatures is the first key idea of our algorithm. 


\subsubsection*{Data-Driven Blocking using Signatures}
Blocking is a widely used technique to improve ER
efficiency~\cite{Christen:2012:DMC:2344108}. Na{\"i}vely, linking two
databases containing $m$ and $n$ records respectively requires
$O(mn)$ record pair comparisons. Most of these comparisons lead to
non-matches, i.e.\ they correspond to two records that refer to
different entities. To reject these non-matches with a lower cost,
one may first partition the raw records according to criteria
selected by a user. These criteria are called \emph{blocking
keys}~\cite{Chr12b}. Examples of blocking keys include attributes
such as first and last name, postcode, and so on. During linkage,
comparison is only carried out between records that fall into the
same partition, based on the assumption that records sharing no
blocking keys do not match with each other.

The efficiency and completeness of ER is largely determined by
blocking-key selection, which again is challenging in itself. If the
keys are not distinctive between disparate entities, many irrelevant
records will be placed into the same block, which gains little
improvement in efficiency. If the keys are not invariant with
respect to records of the same entities, records of the same entity
will be inserted into different blocks and many true matching record
pairs will be missed. If the key values do not distribute evenly
among the records, the largest few blocks will form the bottleneck of
ER efficiency. When dealing with a large dataset, it is challenging
to balance all these concerns. Moreover, the performance of blocking
keys also depends on the accuracy of any data standardisation and
cleansing performed~\cite{Christen:2012:DMC:2344108}.

In an ideal world, we would like to use signatures as the blocking key and
place only records of the same entity into the same block. 
In practice, we do not know which subrecords are signatures but we can still approximate the strategy by blocking on probabilistically identified signatures, as we describe in
Section~\ref{sec-signatures}.
These probabilistic signatures tend to be empirically distinctive and exhibit low-frequency in the database, which allows small and accurate blocks to be constructed.
The only risk is these blocking keys may not be invariant with respect to records of the same entities. 
To address this, we introduce an inter-block connected component algorithm,
which is explained next.

\subsubsection*{Connected Components for Transitive Linkage}
As discussed above, the blocking-by-probabilistic-signature technique
leads to quite targetted blocking of records, with high precision but
possibly low recall. This is in contrast to standard blocking
techniques that tend to have low precision but high
recall~\cite{Chr12b}. 
To compensate for the loss in recall, we allow each record to be inserted
into multiple blocks, using the fact that each record may contain
multiple distinct signatures. Moreover, to link records of the same
entity that do not share the same signature, we allow two records in
two different blocks to be linked if they are linked to the same
third record in their own blocks.
To implement such an indirect (transitive) link, we run a connected
component algorithm to assign records connected directly or
indirectly with the same label (entity identifier). 

A particular challenge in our context is the size of the graphs we
have to deal with. There are as many nodes as the number of records.
Such a graph can be too large to fit into main memory. Random
access to nodes in the graph, which is required by traditional
depth/breadth-first search algorithms, might therefore not be
feasible.
To addres this, we propose a connected-component labelling algorithm that fits large
graphs that are stored in a distributed database. The algorithm uses
standard relational database operations, such as grouping and join,
in an iterative way and converges in linear time. This connected
component operation allows us not only to use small-sized data
blocks, but also to link highly inconsistent records of the same
entity transitively.


\subsubsection*{Implementation on Parallel Databases}
Massively parallel processing databases like Teradata and Greenplum
have long supported parallelised SQL that scales to large datasets.
Recent advances in large-scale in-database analytics platforms
\cite{DBLP:journals/pvldb/HellersteinRSWFGNWFLK12,
Zaharia:2010:SCC:1863103.1863113} have shown how sophisticated
machine learning algorithms can be implemented on top of a
declarative language like SQL or MapReduce to scale to
Petabyte-sized datasets on cluster computing.

One merit of our proposed method is it can be implemented on
parallelised SQL using around ten SQL statements. As our experiments
presented in Section~\ref{sec-experiments} show, our algorithm can
link datasets containing thousands of records in seconds, millions of
records in minutes, and billions of records in hours on medium-sized clusters built using inexpensive commodity hardware.

\subsubsection*{Paper Contributions}
The contributions of this paper is a novel ER algorithm that
\begin{enumerate}\itemsep1mm\parskip0mm
  \item introduces a probabilistic technique to identify, from
        unlabelled data, entity signatures derived from a first-principles 
        formulation of the ER problem;

 \item introduces a new and effective data-driven blocking technique based on the occurrence of common probabilistic signatures in two records;
         
  \item incorporates a scalable connected-component labelling
        algorithm that uses inverted-index data structures and
        parallel databases to compute transitive linkages in large
        graphs (tens to hundreds of millions of nodes);
        
  \item is simple and scalable, allowing the whole algorithm to be written in
        $\sim$10 standard SQL statements on
        modern parallel data platforms like Greenplum and Spark;      
        
  \item achieves state-of-the-art performance on several benchmark
        datasets and pushes the scalability boundary of existing ER algorithms.
\end{enumerate}
Our paper also provides a positive answer to an open research problem
raised by \cite{papadakis17} about the
existence of scalable and accurate data-driven blocking algorithms.

The paper is organised as follows. 
In Section~\ref{sec:problem formulation} we formulate the ER problem precisely.
In Section~\ref{sec-signatures} we describe how entity signatures can be identified in a probabilistic way.
In Section~\ref{sec:cc} we propose a scalable graph-labelling algorithm for identifying transitive links. 
We present the overall algorithm for signature-based ER in Section~\ref{sec:signature-er}.
Experimental results are presented in Section~\ref{sec-experiments}, followed by a literature review and discussion in Section~\ref{sec:rel-work} and conclusion in Section~\ref{sec:conclusion}.



\section{Problem Formulation}\label{sec:problem formulation}

The ER problem is usually loosely defined as the problem of
determining which records in a database refer to the same entities.
This informal definition can hide many assumptions, especially on the
meaning of the term ``same entities''. To avoid confusion, we now
define our ER setting in a more precise manner.

\begin{definition}
A {\bf possible world} is a tuple $(W, R, E, D)$, where $W$ denotes a
set of words; $R$ denotes the set of all records, where a record $r
\in R$ is a sequence of words from $W$ (i.e.\ order matters);
$E=\{e_1, e_2, \dots\}$ denotes a set of entity identifiers; and
$D: E \times R$ is a subset of the Cartesian product between $E$ and
$R$.
\end{definition}

We say record $r\in R$ refers to entity $e \in E$, if
$(e,r) \in D$. Note that an entity may be referred to by multiple (possibly
inconsistent) records, and each record may refer to multiple
entities, i.e., there are ambiguous records. Some records may belong
to no entities in $E$. For example, \emph{John Smith, Sydney} is
likely referring to several individuals named \emph{John Smith} who
live in \emph{Sydney}, and therefore this record is ambiguous as it
can refer to any of them. On the other hand, in real-world databases
there are often records that contain randomly generated, faked, or
corrupted values, such as those used to test a system or that were
intentionally modified (for example \emph{John Doe} or
\emph{(123) 456-7890}) by a user who does not want to provide their
actual personal details~\cite{Chr16}.

In practice, a possible world is only `knowable' through a (finite)
set of observations sampled from it.

\begin{definition}\label{def:observations}
Given a possible world $(W,R,E,D)$, we can sample an $(e,r)$ pair
using some (usually unknown) probability distribution on $D$. By
repeating the sampling $n$ times, we obtain a set of {\bf labelled
observations} of the possible world, $\{ (e_i, r_i) \}_{i=1 \ldots
n}$. From labelled observations, we can derive {\bf unlabelled
observations} by removing all the $e_i$'s.
\end{definition} 

Roughly speaking, ER is the problem of reconstructing labelled
observations from unlabelled observations.

\begin{definition}\label{def:er}
Given a set of unlabelled observations $O$
sampled from a possible world $(W,R,E,D)$, {\bf entity resolution}
is the problem of constructing a partition of $O$
\[ O = \bigcup_{i} O_i \] satisfying the following two properties:
(1) for each $O_i$, there exists an $e \in E$ such that
$\{ (e,r) \,|\, r \in O_i \} \subseteq D$; and
(2) the number of partitions is minimised.
\end{definition}

A trivial way to satisfy the first condition of Definition~\ref{def:er} is to assign each record in $O$ to its own partition. 
The second condition is needed to make sure records of the same underlying entity are assigned to the same partition.
ER as defined above is an underconstrained optimisation problem. For example,
there could be multiple ways of partitioning a set of unlabelled
observations that all satisfy Definition~\ref{def:er} because of the
existence of ambiguous records that refer to multiple entities.
We need further assumptions on the structure of possible worlds, in
particular the structure of $D$, to be able to distinguish between
possible solutions.
The following are some common ways of refining the ER problem, each
with its own assumptions on $D$.
\begin{enumerate}
  \item \textbf{Supervised Learning Methods}:
        The first class of methods assume that a set of labelled
        observations is available with which we can apply supervised
        learning techniques to label a much larger set of unlabelled
        observations~\cite{Christen:2012:DMC:2344108,
        Kopcke:2010:EER:1920841.1920904}. In particular, these methods
        assume the joint probability distribution of entities and
        records $P:E \times R \rightarrow [0,1]$ induced by the
        unknown $D$ and the observations' sampling process have enough
        structure, in the learning-theoretic sense
        of~\cite{anthony-bartlett99, bartlett-mendelson02}, to be
        learnable from finite sample sizes and suitable model
        classes. Note the probability of learning good models is with
        respect to a probability distribution on the possible worlds
        that are consistent with a set of labelled observations.

  \item \textbf{Distance Based Methods}:
        The second class of methods work only with unlabelled
        observations and assume records can be embedded into a metric
        space, where records of an entity fall in a compact
        region~\cite{Jin03}. One first finds such a metric space in
        the form of a suitable distance function that incorporates
        domain knowledge on what constitutes records that
        likely belong to the same entity. Records are then clustered,
        either exactly through a nearest-neighbour algorithm or
        approximately using blocking or clustering
        techniques~\cite{Chr12b}, and then labelled based on some
        linkage rule. This is by far the most common approach to ER
        and has a long history going back nearly fifty
        years~\cite{fellegi-sunter69}. Distance based methods are
        sometimes used in conjunction with supervised learning
        algorithms to determine the linkage rule or clustering
        thresholds~\cite{Christen:2012:DMC:2344108}.
\end{enumerate}

\subsubsection*{Signature-Based Entity Resolution}
We consider in this paper a family of signature-based methods, where
we assume each entity has distinctive signatures that can be detected
from a set of unlabelled observations (sampled from a possible world)
and that the signatures so-detected can be used to link records of the
same entities. Compared to the other two types of methods described
above, signature-based methods make a number of alternative
assumptions on the structure of possible worlds which we now describe.

A sufficient condition for a record to be a signature is that it belongs to one and only one entity in a possible world. 
However, the condition is not a necessary one because a signature of an entity $e$ does not have to be a record of $e$, but merely one computationally derivable from a record belonging to $e$. 
We now formalise the idea.

\begin{definition}\label{def:sig3}
Given a possible world $(W,R,E,D)$ and a computable relation $T: R\times R$,
record $s$ is a \textbf{signature} of an entity $e$ subject to $T$ iff 
\begin{enumerate}
  \item $\exists r \in R$ such that $(s,r) \in T$ and $(e,r)
        \in D$; and
  \item $\forall f \in E$, $\forall r \in R$, if $(f,r) \in D$ and $(s,r) \in T$, then $e = f$. \label{def:sig3:part2}
\end{enumerate}
\end{definition}
One way to understand Definition~\ref{def:sig3} is that $T$ defines a computable
transform of a record $s$ into all its variants $\{ r \,|\, (s,r) \in T \}$,
and $s$ is a signature of $e$ if all its variants obtained via $T$ 
contain and only contain records belonging to $e$. 
A signature provides sufficient condition to link two records.
\begin{proposition}\label{prop:sufficient}
Let $P$ be a possible world, $T$ a relation, and $s$ a signature of an entity subject to $T$. 
Two unlabelled observations $r,t$ sampled from $P$ belong to the same entity if $(s,r) \in T$ and $(s,t) \in T$.
\end{proposition}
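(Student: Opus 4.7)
The plan is to invoke part (2) of Definition~\ref{def:sig3} twice, once for each observation. By hypothesis, $s$ is a signature of some entity, call it $e$, subject to $T$. The two records $r$ and $t$ are unlabelled observations sampled from the possible world, so by Definition~\ref{def:observations} each one arose from a labelled pair in $D$: there exist entities $e_r, e_t \in E$ with $(e_r, r) \in D$ and $(e_t, t) \in D$.

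Next I would apply clause~\ref{def:sig3:part2} of Definition~\ref{def:sig3} to the pair $(e_r, r)$. We have $(e_r, r) \in D$ and, by assumption, $(s, r) \in T$, so the clause forces $e_r = e$. Running the same argument with $(e_t, t)$ and $(s, t) \in T$ gives $e_t = e$. Therefore $e_r = e_t$, which is exactly the statement that $r$ and $t$ refer to the same entity.

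There is no real obstacle here; the proposition is essentially an immediate unpacking of the signature definition. The only subtlety worth flagging explicitly is the first step, namely justifying that an unlabelled observation must come from \emph{some} underlying $(e', r') \in D$ — this is where I would cite Definition~\ref{def:observations} to make clear that unlabelled data is obtained by projecting away the entity component of a labelled sample, so the witness entity always exists even though it is not visible to the algorithm. Once that is in place, the two applications of the uniqueness clause close the argument in a couple of lines.
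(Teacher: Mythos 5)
Your argument is correct and follows exactly the same route as the paper's own proof: use Definition~\ref{def:observations} to obtain witness entities $e_r, e_t$ with $(e_r,r),(e_t,t)\in D$, then apply clause~\ref{def:sig3:part2} of Definition~\ref{def:sig3} twice to conclude $e_r = e = e_t$. Nothing is missing; your explicit flagging of why the witness entities exist is a fair clarification of a step the paper also takes.
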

\begin{proof}
By Definition~\ref{def:observations}, there exists entities $e_r, e_t \in E$ such that $(e_r,r) \in D$ and $(e_t,t) \in D$. 
By Definition~\ref{def:sig3} part~\ref{def:sig3:part2}, we can infer $e_r = e$ from $(s,r) \in T$ and $(e_r,r) \in D$, and $e_t = e$ from $(s,t) \in T$ and $(e_t,t) \in D$.
\end{proof}

\begin{figure}[t!]
	\centering
    \includegraphics[width=0.47\textwidth]{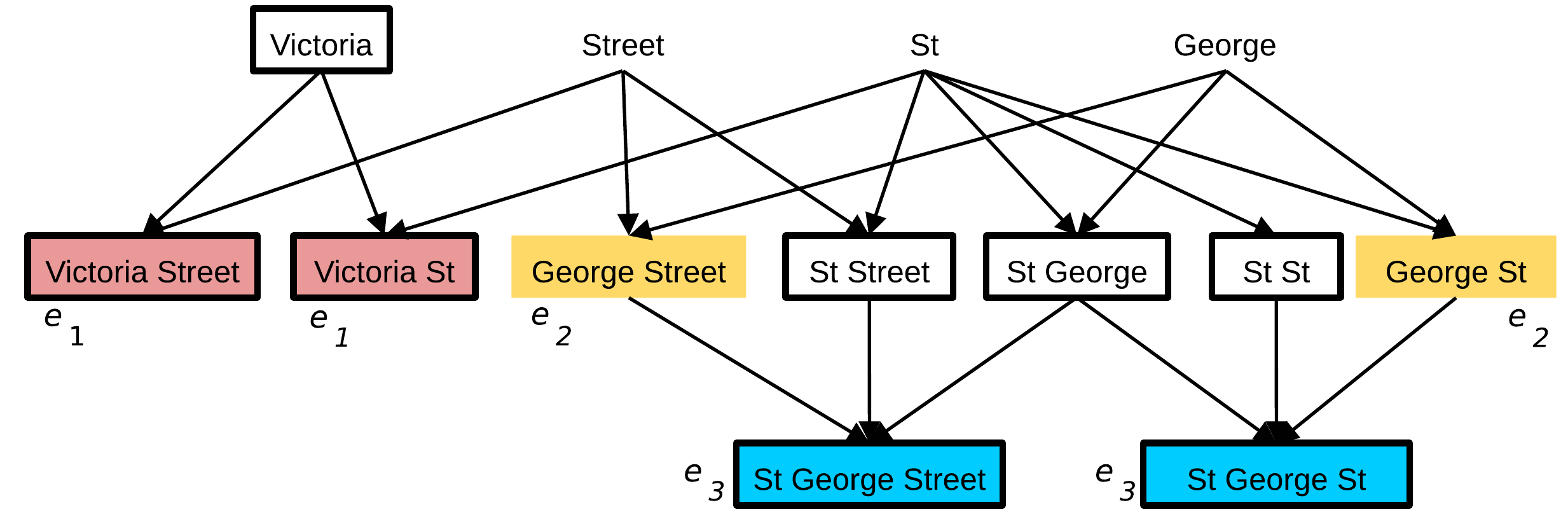}
	\caption{A possible world of addresses of three entities ($e_1, e_2, e_3$) and their signatures as described in Example~\ref{ex:subrecord}, where
    records of different entities are shown in different colours, and
    thick outlines show records/subrecords which are signatures
    subject to the subrecord relation. \label{fig:st} }
\end{figure}

To familiarise readers with our formulation, we now describe some traditional ER algorithms with our concepts.

\begin{example}
Rule-based ER: link two records if they share patterns
predefined by some regular expressions, i.e., link $r$ and $s$ if
$\mathtt{regexp}(r) = \mathtt{regexp}(s)$:
  \[ T=\{(s,r) \mid \mathtt{regexp}(s) = \mathtt{regexp}(r)\} \]
\end{example}
\begin{example}
Distance-based ER: link two records if their distance is
below a threshold $\tau$ according to a selected/learned distance
function $d$~\cite{Christen:2012:DMC:2344108}:
  \[ T=\{(s,r) \mid d(s,r)<\tau\} \]
\end{example}


A common design in traditional ER algorithms is to find a relation $T$ which contains all pairs of records $s$ and $r$ referring to the same entities. Two records $s$ and $r$ are then linked using the fact $(s,s) \in T$, $(s,r) \in T$ and Proposition~\ref{prop:sufficient}. The concept of signature is not explicitly used in this design because every unambiguous record in a dataset will then be a signature. The challenge is all about finding the relation $T$.

In this paper, we follow a different strategy. Instead of searching
for an unknown relation $T$, we start with one (or more) known relation(s) $T$ and then search for records which are signatures subject to this known $T$

A trivial example is when $T$ is defined by equality: $T=\{(s,r) \mid s=r\}$.
Signatures subject to equality are those records that belong to one and only one entity.
These signatures are not particularly interesting, as they can only be used to find exact duplicate records in a database.

\subsubsection*{Signatures based on the Subrecord Relation}
Consider now the more powerful $T$ defined by the \emph{subrecord} relation.
Given a record $r$, we say $s$ is a subrecord of $r$, denoted
$s \preceq r$, if $s$ is a subsequence of $r$, i.e.\ $s$ can be
derived from $r$ by deleting some words without changing the order of
the remaining words.
Equivalently, we sometimes say $r$ is a superrecord of $s$ to mean $s \preceq r$.

\begin{example}\label{ex:subrecord}
Define $T=\{(s,r) \mid s \preceq r\}$. Suppose we have the
possible world shown in Figure~\ref{fig:st}, in which
\begin{itemize}
	\item $W$=\{Victoria, Street, St, George\}
	\item $E$=\{$e_1$, $e_2$, $e_3$\}
	\item $D$=\{($e_1$, ``Victoria Street''), \-\hspace{8pt}($e_1$,
          ``Victoria St''), \\ 
	      \-\hspace{17pt} ($e_2$, ``George  Street''),
          \-\hspace{12pt}($e_2$, ``George St''), \\
          \-\hspace{17pt} ($e_3$, ``St George Street''), ($e_3$,
          ``St George St'')\}.
\end{itemize}
Figure~\ref{fig:st} shows the six records in $D$ as well as their
subrecords. Records of different entities are shown in different
colours. We add thick outlines to records/subrecords which are
signatures subject to the subrecord relation. For example, the word
\emph{Victoria} is a signature because all records in $D$ containing
\emph{Victoria} as a subrecord belong to the same entity $e_1$. We
can therefore link these records during ER despite their inconsistency. In contrast,
\emph{Street} is not a signature because it has three superrecords in
$D$ that belong to three distinct entities. Since a record is a
subrecord of itself, some of the records appearing in $D$ are
signatures as well. A special case is entity $e_2$, which does not
have any signature subject to the subrecord relation because all
its records,\emph{George Street} and \emph{George St}, are
subrecords of another entity's records as well. Therefore all their
subrecords are shared by at least two entities. However, entities
like this, whose records are all subrecords of other entities, are
rare in practice, especially when multiple attributes are considered.
\end{example}

From the example above, we can also see the following distinction between our method and traditional ER methods. By explicitly introducing the concept of signatures, we no longer deal with pairwise linkage between records in $O$, but the linkage between records in $O$ and signatures.

\begin{figure}[t!]
	\centering
    \includegraphics[width=0.45\textwidth]{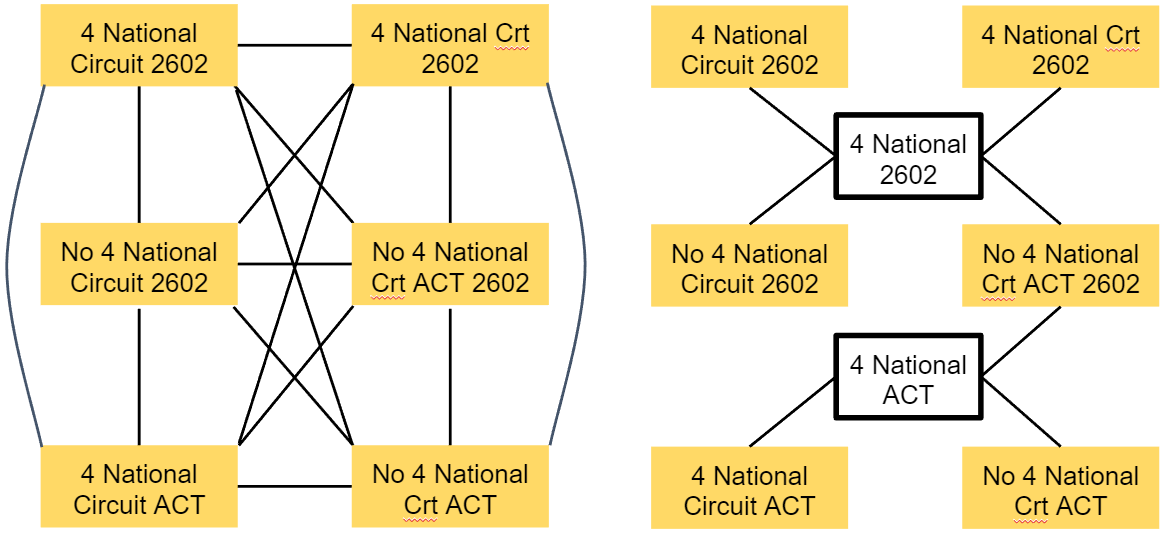}
		\caption{\label{fig:distinction} Left: linkage to be established by traditional ER methods; right: linkage to be established by the proposed method.}
\end{figure}

This distinction is illustrated in Figure~\ref{fig:distinction}, where records are variants of the same address.
Although both graphs depict the same linkage solution, the one used by our method (right-hand side) contains less links due to the usage of signatures. This distinction partly explains why the proposed method is more efficient.

\begin{definition}\label{def:tsubseteq}
Given a set $O$ of unlabelled observations, 
we define $\sqsubseteq$ to be the restriction of $\preceq$ to only terms that are subrecords of observations in $O$.
\[ \sqsubseteq \,= \, \{ (s,r) \mid \exists o\in O.(s \preceq r \wedge r \preceq o) \} \]
\end{definition}

\begin{definition}
Given a set $O$ of unlabelled observations, we call a signature subject to $\sqsubseteq$ a $\sqsubseteq$-signature. 
\end{definition}

\begin{proposition}\label{prop:superrecord signature}
Given a set $O$ of unlabelled observations,
if $s$ is a $\sqsubseteq$-signature of an entity $e$ and 
$s \sqsubseteq r$, then $r$ is also a $\sqsubseteq$-signature of $e$.
\end{proposition}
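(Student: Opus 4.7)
The plan is to verify both clauses of Definition~\ref{def:sig3} for $r$ with the entity $e$, using the fact that $s$ already satisfies them. The key technical tool will be transitivity of the subsequence order $\preceq$, which lifts to $\sqsubseteq$ via the common observation witnessing each $\sqsubseteq$-pair according to Definition~\ref{def:tsubseteq}.

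First, I would handle the uniqueness clause (Part~\ref{def:sig3:part2} of Definition~\ref{def:sig3}). Take any $f \in E$ and $t \in R$ with $(f,t) \in D$ and $(r,t) \in \sqsubseteq$, and aim to conclude $f=e$. Unfolding $(r,t) \in \sqsubseteq$ produces some $o \in O$ with $r \preceq t$ and $t \preceq o$. Combining with $s \preceq r$ (read off from $s \sqsubseteq r$) and transitivity of $\preceq$ yields $s \preceq t \preceq o$, hence $(s,t) \in \sqsubseteq$. The uniqueness clause for the signature $s$, applied to $(f,t) \in D$, then forces $f=e$.

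For the existence clause (Part~1 of Definition~\ref{def:sig3}), I need to exhibit some $t \in R$ with $(r,t) \in \sqsubseteq$ and $(e,t) \in D$. The natural candidate is the observation $o \in O$ witnessing $s \sqsubseteq r$, i.e.\ one with $s \preceq r \preceq o$. Reflexivity of $\preceq$ immediately gives $(r,o) \in \sqsubseteq$. To secure $(e,o) \in D$, I would invoke Definition~\ref{def:observations}: since $o \in O$ was sampled from $D$, there exists some $e' \in E$ with $(e',o) \in D$. The chain $s \preceq o$ with witness $o$ also puts $(s,o) \in \sqsubseteq$, so the uniqueness clause for $s$ pins down $e'=e$.

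The main obstacle I anticipate is not logical but definitional: one must be careful with the restriction in Definition~\ref{def:tsubseteq} requiring that both members of a $\sqsubseteq$-pair sit below some observed $o \in O$. Every step above relies on promoting a $\preceq$-chain into a $\sqsubseteq$-pair by exhibiting such a witness, so at each invocation of the signature property of $s$ I have to track which observation is playing that role. Once this bookkeeping is handled, the argument reduces to two applications of the signature property of $s$ together with transitivity and reflexivity of $\preceq$.
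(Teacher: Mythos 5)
Your proof is correct and follows essentially the same route as the paper's: both parts verify the two clauses of Definition~\ref{def:sig3} for $r$, using the observation $o$ witnessing $s \sqsubseteq r$ as the required record in the existence clause (with the signature property of $s$ pinning down its entity) and transitivity of $\preceq$/$\sqsubseteq$ for the uniqueness clause. Your bookkeeping of the witness $o \in O$ is in fact slightly more careful than the paper's, which at one point writes $s \sqsubseteq r$ where it means $s \sqsubseteq o$.
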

\begin{proof}
Part 1 of Definition~\ref{def:sig3}: 
Since $s \sqsubseteq r$, there exists $o \in O$ such that $s \preceq r$ and $r \preceq o$.
We have $s \sqsubseteq o$ since $\preceq$ is transitive and $o \in O$. 
To show $(e,o) \in D$, observe that $o \in O$ implies there exists $f$ such that $(f,o) \in D$. 
Since $s$ is a $\sqsubseteq$-signature of e and $s \sqsubseteq r$, we have $e = f$. 

Part 2 of Definition~\ref{def:sig3}:
Consider any $t$ and $f$ such that $r \sqsubseteq t$ and $(f,t) \in D$.
We have $s \sqsubseteq t$ since $\sqsubseteq$ is transitive and $s \sqsubseteq r$. 
Since $s$ is a $\sqsubseteq$-signature of $e$,
we have $f=e$. 
\end{proof}

In practical applications of ER, $\sqsubseteq$-signatures are common.
For example, in a database where entities have unique identifiers such as passport numbers, driver's licenses or tax file numbers,
each unique ID is a signature of its entity. (Recall that the
subrecord relation captures the equality relation as a subset.) Even
in the absence of such unique IDs, countries like Australia have
identity verification systems like the 100 point check \cite{ftract} that allows a
combination of possibly non-unique attributes to be used as a person's
signature.

Given a set of unlabelled observations sampled from an unknown possible world, in the following section we provide an algorithm that
can resolve, with high probability, those entities
that have (one or more) $\sqsubseteq$-signatures.
In the rest of this paper, signatures always refer to
$\sqsubseteq$-signature unless otherwise indicated.


\section{Identification of Signatures}
\label{sec-signatures}

Our general strategy for ER is to {\em probabilistically} identify
signatures from unlabelled observations and then transitively link
records via the identified signatures.

Given a set of unlabelled observations $O$, our first step is to
remove all exact duplicate records to arrive at a deduplicated set
of records. In a deduplicated dataset containing $n$ records, a
subrecord recurs $m$ times if $m$ out of the $n$ records are its
superrecord. By definition, a signature is unique to an entity.
Further, a signature may not appear in every record of the entity to
which it belongs. A non-signature, in contrast, can appear in many
distinct records of many distinct entities. Thus as more and more records of are added to a dataset, after deduplication, the
recurrence frequency of a signature is upper bounded by the number of
distinct records of its entity. The recurrence frequency of a
non-signature, however, may keep on growing. 

This is intuitively
clear from Figure~\ref{fig:st}, where the recurrence frequencies of
non-signature records like \emph{Street} and \emph{St} increases much more quickly, upper-bounded only by the size of the database, as more street names are added into the database.
This difference in recurrence frequency between signatures and non-signatures is the major clue behind our technique to (probablistically) separate them.


\subsection{Probability of Observing a Signature}

Empirically, setting the probability of a subrecord being a signature to go down as its recurrence goes up using
a Poisson distribution with a low mean or a power-law distribution appears sufficient.
In the following, we attempt to derive such a distribution from first principles, which at least will provide an understanding of the inherent assumptions we are making in using such a distribution.

Given a record $s$, the probability of a randomly sampled record $r$ satisfying $s \sqsubseteq r$ is given by a Bernoulli distribution with parameter $p_s$. 
The probability for the given $s$ to recur $k$ times as a subrecord in a deduplicated dataset of size $n$ is therefore governed by a Binomial distribution $\mathit{binom(k;n,p_s)}$.
Now consider the probability of a randomly sampled subrecord to recur $k$ times in a deduplicated dataset of size $n$, which is given by 
\begin{equation}
P(k) = \sum_{s} P(t) \cdot \mathit{binom(k; n, p_s)}.
\end{equation}
If the $p_s$'s are mostly small, which is true in our case, then one can show, from empirical simulations, that the pointwise addition of Binomial distributions with different parameters can be approximated by a Poisson distribution 
\begin{equation}
P(k)\approx e^{-\lambda}\frac{\lambda^k}{k!}
\end{equation}
for a suitable $\lambda$ that can be estimated from data.
Therefore, the recurrence of a subrecord, whether a signature or not, follows Poisson distributions. 
The difference between signatures and
non-signatures is with the average recurrence frequency. 

Denote the set of signatures with $S$. Let $\lambda$ and $\mu$ be
the expected recurrence frequency of a signature and a non-signature,
respectively. The probability of observing a signature or a
non-signature $k$ times is therefore
\begin{equation}
P(k \mid r\in S)=e^{-\lambda}\frac{\lambda^k}{k!} ~\;\mathtt{and}\;~ P(k\mid r\notin S)=e^{-\mu}\frac{\mu^k}{k!}.
\end{equation}

By Bayes rule, when we observe a subrecord $k$ times in a dataset,
the probability for this subrecord to be a signature is given by
\begin{equation}\label{eq:poisson}
P(r\in S\mid k)=\frac{P(k \mid r\in S)P(r\in S)}{P(k)},
\end{equation}
where 
\begin{align}
P\big(k\big) =&P\big(k\wedge r\in S\big)+P\big(k\wedge r\notin S\big)\\
=&P(k \mid r\in S)P(r\in S)+P(k \mid r\notin S)P(r\notin S).
\end{align}
We also assume $P(r \in S)$ 
follows a Bernoulli distribution with parameter $c$:
\begin{equation}
P(r\in S)=c.
\end{equation}
Substituting these into Equation~(\ref{eq:poisson}), we have
\begin{align}
&~P(r\in S\mid k)\\
=&~\frac{P(k \mid r\in S)P(r\in S)}{P(k \mid r\in
S)P(r\in S)+P(k \mid r\notin S)P(r\notin S)}\\
=&~\frac{1}{1+\frac{P(k \mid r\notin S)(1-P(r\in S))}{P(k \mid r\in
S)P(r\in S)}}\\
=&~\frac{1}{1+e^{\lambda-\mu}(\frac{\mu}{\lambda})^k\frac{1-c}{c}}
\end{align}
Letting $a=\frac{\mu}{\lambda}$ and $b=e^{\lambda-\mu}\frac{1-c}{c}$, 
we can state the result as
 \begin{equation}\label{eq:result}
 P(r\in S\mid k)=\frac{1}{1+a^kb}\quad.
 \end{equation}

\begin{figure}[t!]
	\centering
	\includegraphics[width=0.45\textwidth]{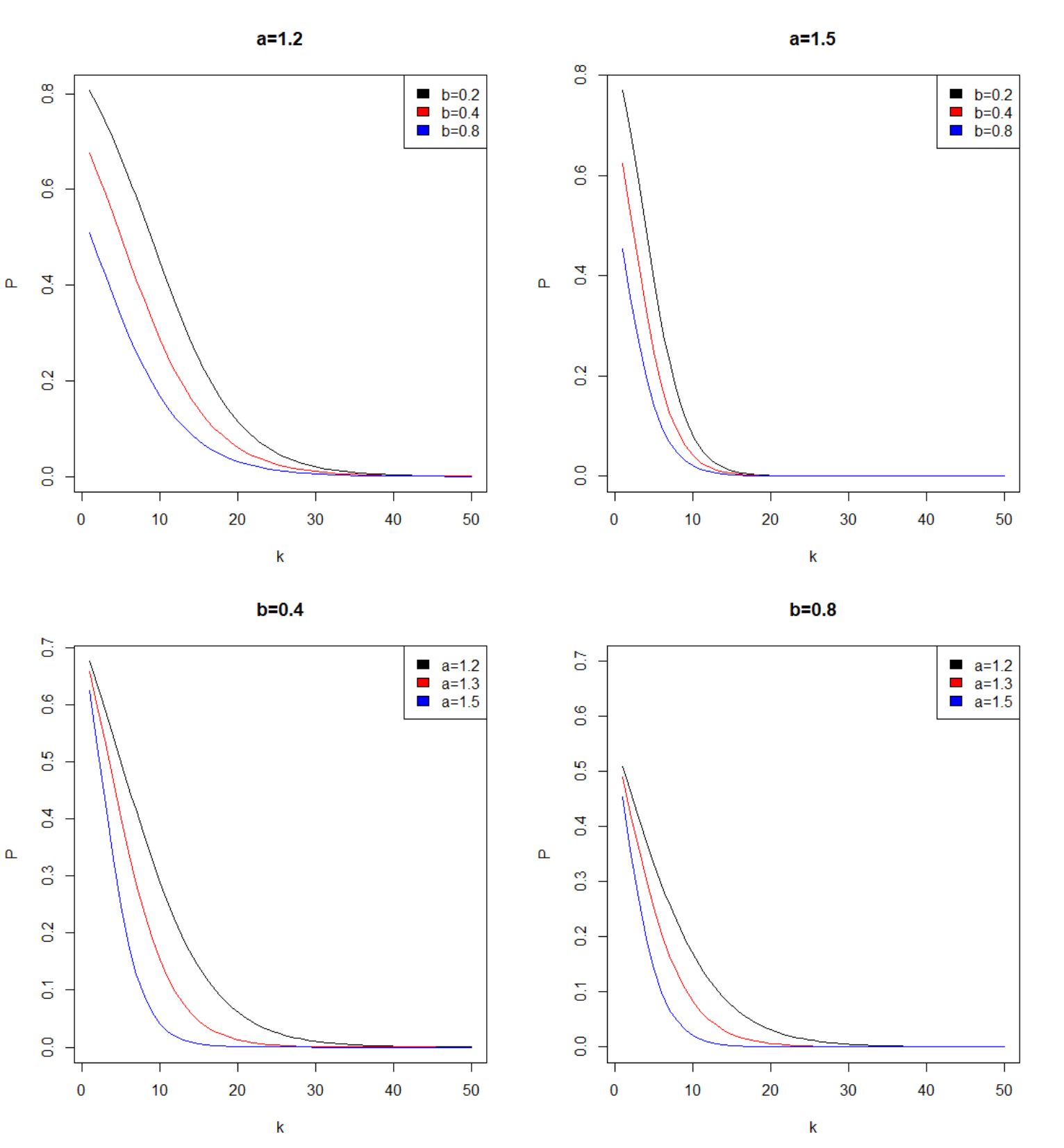}
\caption{Plot of Equation~(\ref{eq:result})\label{fig:ab} by fixing either $a$ or $b$ and changing the other.}
\end{figure}

In practice, since there are more distinct signatures than
non-signatures, i.e.\ $c>1-c$, and a non-signature appears more
frequently than a signature, i.e.\ $\mu>\lambda$, we usually have
$b<1<a$. We can understand the parameters of $P(r \in S \mid k)$
by noting that $a$ controls how fast $P(r \in S \mid k)$ decays as
$k$ increases, and $b$ controls the maximum of $P(r \in S \mid k)$,
as shown in Figure~\ref{fig:ab}.

\subsection{Record Linkage via Common Signatures}
\label{subsec:record linkage}

So far we have worked out how to compute the probability for a single
record to be a signature given its recurrence. In practice, computing
the common subrecords between every pair of records, checking the
recurrence of these subrecords in the database, and then computing
the signature probabilities is prohibitively expensive.

We now show how these probabilities can be approximated efficiently
in a large database. The main idea is to pre-compute a set of
subrecords -- call them \emph{candidate signatures} --
from each record in the database, as well as the probability for each
of these subrecords to be a signature. Given two records $r_i$ and $r_j$, we
approximate the probability for them to share a signature with the
probability of at least one candidate signature shared by both
records being a signature. This approximation can be accelerated by
inverted indices.

More specifically, let $I=\{(s,R_s,p_s)\}$ denote the inverted index
of a database, where each $s$ (inverted index key) denotes a subrecord, $R_s$ denotes the
set of records that contain $s$ as a subrecord, and $p_s=P(s\in S
\mid k=|R_s|)$ is the probability of $s$ being a signature. Computing
linkage probabilities consists of the following steps:
\begin{enumerate}
  \item \textbf{Generation}: From each $(s,R_s,p_s)\in I$, generate
        all tuples of the form of $(r_i,r_j,s,p_s)$ such that
        $r_i,r_j\in R_s$.
  \item \textbf{Elimination}: From all tuples $(r_i,r_j,s,p_s)$
        containing the same $r_i$ and $r_j$, we eliminate those
        tuples whose $s$ appears as a subrecord in another tuple.
        Following Proposition~\ref{prop:superrecord signature}, this is because if a subrecord is a
        signature, then all its superrecords must be signatures.
        We therefore only need to assess the superrecords.
  \item \textbf{Product}: We assume the probability for two
        subrecords being signatures to be independent if they are
        not a subrecord of each other. The probability of $r_i$ and
        $r_j$ sharing a signature can then be computed as
        $1-\prod_s(1-p_s)$ over all $s$ in the remaining tuples
        $(r_i,r_j,s,p_s)$ for the record pair $r_i$ and $r_j$.
\end{enumerate}

We can further improve the efficiency by setting a probability
threshold during generation. That is, we only generate tuples
$(r_i,r_j,s,p_s)$ whose $p_s > \rho$. In other words, when
generating tuples we only consider subrecords whose probability of
being a signature exceeds the threshold $\rho$. This filtering allows
us to remove a large number of subrecords with low probability of
being signatures at an early stage.

The Elimination step above can be skipped, if the precomputed
subrecords from each raw record by design do not contain each other
as subrecords.

After obtaining the probability for a pair of records to share a
signature, we can place the two in a block if this probability exceeds the threshold $\tau$. Note that blocks built this way contain two and only two records each.
One can then employ any similarity function, such as Jaccard similarity, edit distances like Levenhstein and Jaro, or some other domain-specific functions~\cite{Christen:2012:DMC:2344108}, to decide whether to link them at all.
When the parameter $a$, $b$, and $\tau$ are carefully tuned using training data, one can simply link all pairs of records sharing a probability higher than threshold $\tau$.


\section{Connected Components: A Scalable In-Database Algorithm}\label{sec:cc}

The previous section describes how pairs of records can be linked via
probabilistic identification of common signatures.
In this section, we present a scalable algorithm to assign a
consistent label (entity identifier) to records which are linked
either directly or indirectly. The problem is equivalent to the
problem of finding connected components in a general graph \cite{Cormen:2009}, except that the graph in our case is too large to allow random access.
In the following, we propose a
connected-component labelling algorithm that works on large graphs
stored in a distributed, parallel database.

Without loss of generality, we will label each connected component
with the smallest node (record) identifier of the component. Our
algorithm contains two iterative steps. We first transform the input
graph into equivalent trees (a forest) such that nodes on each
connected component are in the same tree, and that the identifier of
a descendant is always larger than that of its ancestors. We then
transform the forest into an equivalent forest in which the height of
all the trees is one. Upon convergence, all nodes in the same
connected component will be connected directly to the root node,
which can then be used as the consistent identifier for all entities
in the tree.

Figure~\ref{fig:cc} shows an example. The input (left) is a set of
node-pairs ($e_1$,$e_2$), ($e_1$,$e_4$), ($e_2$,$e_3$), ($e_2$,$e_4$),
($e_2$,$e_5$), and ($e_3$,$e_5$). Without losing generality, we
always use the smaller entity identifier as the first element in
each pair. We know this is not yet a forest because some nodes, such
as nodes $e_4$ and $e_5$, have more than one parents. When a node has
more than one parents, namely when the node-pairs contain patterns
like ($e_1$,$e_j$), ($e_2$,$e_j$), $\dots$, and ($e_i$,$e_j$), we do
the following replacement:
\begin{align}\label{eq:replacement 1}
(e_1,e_j),(e_2,e_j),\dots,(e_i,e_j) &\Rightarrow \nonumber\\
(e^\star,e_j),(e^\star,e_1),&(e^\star,e_2),\dots,(e^\star,e_i)
\end{align}
where 
\begin{equation}
e^\star=\min(e_1,e_2,\dots,e_i)\quad.
\end{equation}
This is a grouping operation per node $e_j$ that can be implemented
efficiently in a parallel database. During the replacement we drop
duplicated edges and self-loops (an edge connecting a node to the
node itself).

\begin{figure}[t!]
\centering
\includegraphics[width=0.45\textwidth]{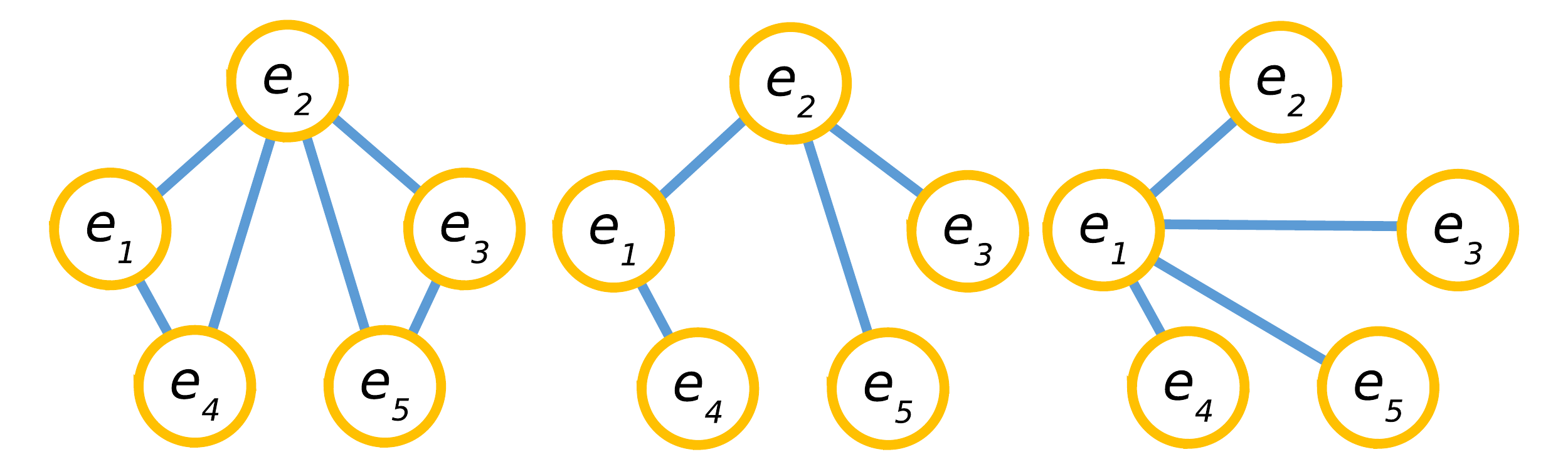}
\caption{\label{fig:cc}Left: input graph; middle: transformed to
trees; right: reduce tree height to one.}
\end{figure}

Through such a replacement, we guarantee that
\begin{enumerate}\itemsep0mm\parskip0mm
  \item the connections between $e_1$, $e_2$, $\dots$, $e_i$, $e_j$
        are preserved; and
  \item $e_j$ ends up with a single parent.
\end{enumerate}
The newly added node pairs may introduce new parents to existing
nodes in the graph. We therefore apply the replacement step
(Equation~(\ref{eq:replacement 1})) recursively until every node has a
single parent. Convergence is guaranteed because the sum of node
identifiers in the list is non-negative and each replacement always
reduce this sum by a positive integer. Upon convergence of the first
replacement step, we obtain the second graph (middle) in Figure~\ref{fig:cc}
which is a forest with node-pairs ($e_1$,$e_2$), ($e_1$,$e_4$),
($e_2$,$e_3$), and ($e_2$,$e_5$).

A tree's height is larger than one if its node-pairs contain
patterns like  ($e_i$,$e_j$) and ($e_j$,$e_k$), namely a node exists
as a parent and a child at the same time. For a tree whose height is
larger than one, we iteratively do the following replacement 
\begin{equation}
(e_i,e_j),(e_j,e_k)\Rightarrow(e_i,e_j),(e_i,e_k)
\end{equation}
until the height of all trees become one, as shown in the right side of Figure~\ref{fig:cc}. This is a join operation
that can be implemented efficiently in a parallel database. If we
denote by $h$ the height of the highest tree in the forest, then
the above 
converges in $\log_2(h)$ rounds.


\section{The p-Signature Algorithm}
\label{sec:signature-er}

We are now ready to present our signature-based algorithm for ER, which is given in Algorithm~\ref{alg:1}.
The algorithm requires these inputs:
\begin{itemize}
 \item $O=\{r\}$, a set of unlabelled observations;
 \item $U=\{s\}$, a set of subrecords selected by users as candidate signatures based on domain knowledge;
  \item $\rho$ and $\tau$, thresholds: we consider a
subrecord only if its probability of being a signature
exceeds $\rho$; and we adopt a link if the probability for two
records to share a signature exceeds $\tau$; and
 \item $v$, an optional similarity function.
\end{itemize}

The first four steps of the algorithm are as described in Section~\ref{subsec:record linkage}.
In the algorithm, $\gets$ denotes the operation of adding an element to a set and $\backslash$ denotes the operation of removing an element from a set.

In Step~\ref{algo1:step1}, $I=\{(s,R_s,p_s)\}$ denotes the inverted index of $O$ with respect to $U$, where $s\in U$, $R_s\subseteq O$ denotes the set of records all containing $s$ as a subrecord, and each $p_s=P(s\in S\mid k=|R_s|)$ is the probability of $s$ being a signature given that $s$ appears in $|R_s|$ different records in the database (see Equation~(\ref{eq:result})). 
In Step~\ref{algo1:step2}, the condition $r_i < r_j$ is there to ensure we don't generate symmetric entries.
Step~\ref{algo1:step3} can be done because of Proposition~\ref{prop:superrecord signature}.
Step~\ref{algo1:pairwise} selects the final pairwise linkages based on the potential linkages computed earlier. 
The first three steps can be thought of as the blocking/indexing step in a standard ER framework, and Step~\ref{algo1:pairwise} can be thought of as the record comparison step. 
At the end of Step~\ref{algo1:pairwise}, $L=\{(r_i,r_j)\}$
holds all the detected links between records in $O$.
In Step~\ref{algo1:last step}, $c$ denotes the connected components algorithm described in Section~\ref{sec:cc}.

\begin{figure}[t!]
\begin{algorithm}[H]
\begin{algorithmic}[1]
\REQUIRE $O=\{r\}$, $U=\{s\}$, $\rho$, $\tau$, $v$ \medskip
\STATE\label{algo1:step1} Build inverted index: 
  $$I \gets (s,R_s,p_s) ~,\; \forall s \in U, R_s \subseteq O, p_s
  > \rho$$
\STATE\label{algo1:step2} Generate potential linkages:
  $$K \gets (r_i,r_j,s,p_s)~, \forall (s,R_s,p_s) \in I,~ r_i,r_j
    \in R_s, r_i<r_j$$ 
\STATE\label{algo1:step3} Eliminate redundant linkages: 
  $$K=K \backslash (r_i,r_j,s,p_s),~ \forall r_i, r_j, s,p_s $$
  if $(r_i,r_j,\hat{s},p_{\hat{s}})\in K$ and
  $s$ is a subrecord of $\hat{s}$.\medskip
  
\STATE\label{algo1:pairwise} Finalise pairwise linkages:
  $$L\gets(r_i,r_j)$$
  for all $r_i$ and $r_j$ such that
  $$ 1-\prod_{(r_i,r_j,s,p_s)\in K}(1-p_s) >\tau\quad $$
  and $v(r_i,r_j)=\text{true}$ \medskip

\RETURN $c(L)$.\label{algo1:last step}
\end{algorithmic}
\caption{\label{alg:1} Signature-based Entity Resolution}
\end{algorithm}
\end{figure}


\subsubsection*{Candidate Signatures}

The ER algorithm above requires a user to specify a set of
candidate signatures as input. 
These candidate signatures have an impact on both the accuracy and computational complexity of the algorithm and should be chosen based on domain
knowledge about a database and can differ from case to case. 
In Section~\ref{sec-experiments}, we will provide some concrete examples of candidate signature specifications and discuss the issue of how to construct good candidate signatures.

\subsubsection*{Post-Verification Rules}
An important but optional parameter in Algorithm~\ref{alg:1} is $v$, the post-verification rules.
It is largely an optional parameter when training data is available to tune the other parameters.
But when training data is not available, $v$ is a mechanism for the user to supply additional domain knowledge to improve ER accuracy.
The post-verification rules can be as simple as a suitably thresholded distance function like Jaccard or Jaro-Winkler.
However, it is more commonly used to resolve prickly and context-dependent cases like family members that live in the same address, a person and his company (e.g. John Smith and John Smith Pty Ltd), and distinct franchisees that use a common bank account.

\subsubsection*{Computational Complexity and Implementation}
The computational complexity of Algorithm~\ref{alg:1} is dominated by the first two steps, which have time and space complexity $O(m)$, where $m$ is the number of distinct candidate signatures extracted.
Most natural choices of candidate signatures leads to $m \sim O(n)$, where $n$ is the size of the (deduplicated) dataset. The scalability of the algorithm is studied empirically in Section~\ref{sec-experiments}.

We have two implementations of the algorithm, one in SQL running on Greenplum, and one in Scala running on Spark.
The SQL code is similar in structure to that in \cite{zhang17} and
involves only joins (all efficiently executable using hash-join \cite{zeller90}) and straightforward group-by operations.
The Spark version has less than 100 lines of code and is the one we use in a production system.
Both the parallelised SQL and Spark code are undergoing due dilligence to be made open-source and available on Github.


\section{Experimental Evaluation}
\label{sec-experiments}

\newcolumntype{C}[1]{>{\centering\let\newline\\\arraybackslash
  \hspace{0pt}}m{#1}}
\newcolumntype{M}[1]{>{\centering\let\newline\\\arraybackslash
  \hspace{0pt}\columncolor[gray]{0.9}}m{#1}}

\begin{table*}[th!]
  \caption{\label{tab:statistics}A summary of the results of applying the proposed
    algorithm on benchmark datasets.}
  \centering
  \begin{small}
  \begin{tabular}{|c|c|c|c|c|c|c|c|c|c|c|} \hline
    ~ & DBLP & Scholar & DBLP & ACM & Abt & Buy & Amazon &
      Google & NCVR-2014 & NCVR-2017 \\ \hline \hline
  Records & 2,616 & 64,263 & 2,616 & 2,294 & 1,081 & 1,092
    & 1,363 & 3,226 & 5,616,368 & 7,861,249 \\ \hline
  Subrecords & 547,722 & 6,052,597 & 742,952 & 558,731 &
    24,348 & 25,179 & 21,037 & 77,787 & 131,218,277 &
    162,115,747 \\ \hline
  Ground truth & \multicolumn{2}{c|}{5,347} & \multicolumn{2}
    {c|}{2,224} & \multicolumn{2}{c|}{1,097} & \multicolumn{2}
    {c|}{1,300} & \multicolumn{2}{c|}{5,015,915} \\\hline 
  Precision & \multicolumn{2}{c|}{91.0\%} & \multicolumn{2}
    {c|}{97.7\%} & \multicolumn{2}{c|}{87.9\%} & \multicolumn{2}
    {c|}{60.2\%} & \multicolumn{2}{c|}{96.3\%} \\ \hline
  Recall & \multicolumn{2}{c|}{89.5\%} & \multicolumn{2}{c|}{97.4\%}
    & \multicolumn{2}{c|}{60.4\%} & \multicolumn{2}{c|}{66.1\%} &
    \multicolumn{2}{c|}{89.5\%}\\ \hline
  F-measure & \multicolumn{2}{c|}{90.2\%} & \multicolumn{2}
    {c|}{97.6\%} & \multicolumn{2}{c|}{71.6\%} & \multicolumn{2}
    {c|}{63.0\%} & \multicolumn{2}{c|}{92.8\%} \\ \hline
  Time & \multicolumn{2}{c|}{10 sec} & \multicolumn{2}{c|}{6 sec} &
    \multicolumn{2}{c|}{6 sec} & \multicolumn{2}{c|}{10 sec} &
    \multicolumn{2}{c|}{307 sec}\\\hline  
	\end{tabular}
  \end{small}
\end{table*}

We use six different ER problems to empirically evaluate the proposed algorithm.
The entities in these six
problems range from academic publications and commercial products, to
individuals and organisations. The datasets range from thousands
to billions of records in size. There is also a large
diversity of data quality issues, including incompleteness,
incompatible formats, errors, and temporal inconsistency. We use
these datasets to benchmark the accuracy as well as scalability of
our proposed algorithm. 
All the experiments are done using the open-source Greenplum Database running on 8 servers (1 master + 7 slaves), each with 20 cores, 320 GB, and 4.5 TB usable RAID10 space.
The results are summarised in Table~\ref{tab:statistics}.


\subsection{Entity Resolution Quality}

In the first experiment, we apply our algorithm to the four publicly
available datasets evaluated in~\cite{Kopcke:2010:EER:1920841.1920904} 
where ground truth is available:
(1) DBLP-ACM, (2) DBLP-Google Scholar, (3) Apt-Buy, and (4)
Amazon-Google-Products.
The entities in the first two datasets are academic publications,
and each record contains title, authors, venue, and year of 
publication. The entities in the third and fourth datasets are
consumer products, and each record contains name, description,
manufacturer, and price.

\begin{table*}[th!]
  \caption{\label{tab:f} F-measure of our proposed method
      ($\sqsubseteq$-signature) as well as five existing methods on
      four benchmark datasets, as reported
      by~\protect\cite{Kopcke:2010:EER:1920841.1920904}. The top
      performer of each dataset is presented in bold font.}
  \centering
  \begin{small}
  \begin{tabular}{|l|C{0.1\textwidth}|C{0.1\textwidth}|C{0.1
    \textwidth}|C{0.1\textwidth}|C{0.1\textwidth}|C{0.1\textwidth}|}
      \hline
    ~ & COZY & FEBRL \newline FellegiSunter & FEBRL SVM & MARLIN
      ADTree & MARLIN SVM & $\sqsubseteq$-Signature\\ \hline \hline
      DBLP-Scholar & 82.9 & 81.9  & 87.6 & 82.9 & 89.4 & \textbf{90.2}
      \\ \hline
    DBLP-ACM & 93.8 & 96.2 & \textbf{97.6} & 96.4 & 97.4 &
      \textbf{97.6} \\ \hline
    Abt-Buy & 65.8 & 36.7 & 71.3 & 54.8 & 70.8 &\textbf{71.6} \\
      \hline
    Amazon-Google & 62.2 & 53.8 & 60.1 & 50.5 & 59.9 &
      \textbf{63.0}\\\hline
  \end{tabular}
  \end{small}
\end{table*}

For academia publications, we use the following two types of subrecords as candidate signatures:
\begin{enumerate}\itemsep1mm\parskip0mm
\item three consecutive words in title; and
\item two consecutive words in title, plus two random words in
      authors.
\end{enumerate}
For commercial products, we use the following three types of candidate signatures:
\begin{enumerate}\itemsep1mm\parskip0mm
\item one word from name;
\item two consecutive words from name; and
\item three consecutive words from name.
\end{enumerate}

Following previous evaluation 
work~\cite{Kopcke:2010:EER:1920841.1920904}, we run our algorithm
multiple times with varying parameters 
and then pick the best-performing model. We use F-measure to quantify the
performance, which is defined as the harmonic mean of precision and recall
\cite{Christen:2012:DMC:2344108}.
We note that the legitimacy of using the F-measure to evaluate ER algorithms is questioned in a recent paper~\cite{Hand2017}.
However, we use the F-measure here because it allows direct
comparisons with the earlier evaluation presented
in~\cite{Kopcke:2010:EER:1920841.1920904} (which does not include
precision and recall results).

The result of our method and five other algorithms, three of which
are supervised machine learning based classification algorithms, are
presented in Table~\ref{tab:f}. The performance of the other
algorithms is taken from \cite{Kopcke:2010:EER:1920841.1920904}.
The top performer for each dataset is highlighted in bold.
Our proposed method turns out to achieve state-of-the-art results 
on all four datasets (tied for first in one case).
Although the winning margin may not always be statistically significant, the consistent
good performance across the four diverse datasets is noteworthy, however.

Note also that the performance of our method is achieved by fixed
subrecord types as described above. It is possible to further
improve the current performance with other types of subrecords that
are customised for each dataset. 


\subsection{Entity Resolution Scalability}

To test the scalability of our method, we employ it to link records across
two snapshots of the North Carolina Voter Registration (NCVR)
database (\url{http://dl.ncsbe.gov/}). We used a snapshot from
October 2014 and linked it with a snapshot from October 2017. We
used the following information of each voter for the linkage:
\begin{itemize} \itemsep1mm\parskip0mm
	\item full name (first, middle, and last name);
	\item residential address (street, city, zipcode and state);
	\item mail address (street, city, zipcode and state);
	\item phone number;
	\item birth state; and
	\item birth age.
\end{itemize}
Note that there is a temporal aspect to this particular ER problem,
in that each attribute above for the same voter may change over 
the three years, except birth state and birth age (with age being
increased by three from 2014 to 2017). Among the 5,015,915 voters
who appear in both datasets, the percentage of voters who changed
their name, residential address, mail address, or phone number are
$5\%$, $33\%$, $33\%$, and $48\%$, respectively. Moreover, $3\%$ of
the voters changed their birth state, and $6\%$ of the voters have
inconsistent age (not differing by 3 years) in the two datasets.
Each voter also has an identifier (NCID), which is used to generate
the ground truth for ER.

We used the following subrecords as candidate signatures:
\begin{itemize} \itemsep1mm\parskip0mm
  \item two random words from name, two consecutive words from
        residential address;
  \item two random words from name, two consecutive words from mail
        address;
  \item two random words from name, last six digits from phone number;
        and
  \item full name, birth state, birth age;
\end{itemize}
where birth age from NCVR-2014 is incremented by 3 to align with
that in NCVR-2017.

As Table~\ref{tab:statistics} shows, 
while the size of the NCVR dataset is
about 1,000 times larger than the other benchmark datasets, 
the total time used for ER only increased 30 to 50 times.

No previous ER work has been applied to the same NCVR dataset at the scale we have done, which makes comparison difficult.
A relevant previous work is \cite{Hu2017}, which randomly sampled subsets of
size 5,000, 10,000, 50,000, and 100,000 from the NCVR dataset to
implement temporal ER. As \cite{Hu2017} shows, the performance of
the considered algorithms monotonically declines as the size of the
sampled dataset increases. The top performer on the largest subset,
which contains 100,000 records, achieved an F-measure of $92\%$ per
\cite{Hu2017}. Our method is applied to the complete datasets
between two time points and achieved comparable accuracy.


\subsection{Large Scale Transitive ER}

So far we only considered the scenarios where ER is between two
different datasets in a pairwise manner. Now we consider ER
within a single dataset (deduplication). The considered dataset is maintained by
an Australian Government agency, containing over 690 million reports
submitted by over 10,000 organisations over 10 years. More than 3.9
billion individuals and organisations appear in these
reports. Our aim is to identify records by the same individuals and
organisations and link them together.

When an entity appears in a report, some or all of the following
information may be provided: name, proof of ID (such as driver's
license or passport), address, date of birth, company number,
telephone number, email, bank account number.
%
The format of each type of information differs from report to report.
In most reports, one or more attributes are not available. Since we
have no ground truth for this dataset, we report only the
scalability of our proposed algorithm.

After removing exact duplicate records, the number of distinct
records was reduced to around 300 millions. To handle the poor
data quality, we generated 13 types candidate signatures from each record. In particular, the first 7 types of candidate signatures contain two random words from name followed by any of the following
\begin{enumerate}\itemsep1mm\parskip0mm
  \item two consecutive address words;
  \item last six digits of ID number;
  \item date of birth;
  \item last six digits of company
    number;
  \item last six digits of telephone,
    number;
  \item  email; and
  \item last six digits of account number.
\end{enumerate}
The other 6 types of candidate signatures contain two consecutive address words followed by either of item 2-7 above.

We do not require two name words to be consecutive to allow names in
inconsistent formats to be compared. We however require address words
to maintain their input order because the order of address words is
more consistent than that of name, and an address is usually much
longer than a name, and there would be too many unordered
combinations to consider. We use the last six digits of account
number, telephone number, and proof of ID, because these attributes
are usually longer than six digits, the ending parts of these
attributes usually have more consistent format than their starting
parts, and being identical in the last six digits rarely leads to
false matches especially when they are concatenated with name.

\begin{table} 
\caption{\label{tab:huge} Large-scale Transitive ER: size of each intermediate output and the time taken} 
  \centering
  \begin{small}
  \begin{tabular}{|l|r|r|} \hline
  & Size~~~~~~ & Time~~~ \\ \hline \hline
  Records & 3,989,630,008 & \\ \hline
  Distinct records & 268,657,406 & 1,585 sec \\ \hline
  Candidate signatures & 4,084,438,114 & 626 sec \\ \hline
  Pairwise links & 1,002,675,163 & 6,839 sec \\ \hline
  Verified links & 623,498,453 & 3,083 sec \\ \hline
  Connected components & 148,163,665 & 496 sec \\ \hline
  Overall on Greenplum & & 12,629 sec \\ \hline\hline
  Overall on SparkSQL && 5,044 sec\\ \hline
\end{tabular}
\end{small}
\end{table}

One practical difficulty in applying the proposed algorithm to a real and large
dataset is that we have no labelled data to tune our parameters.
In our business context, a false link usually has a much higher cost than a
missing link. 
We therefore adopted some post-verification rules such as Jaccard distance on
linked entities to further improve precision at the cost of lower recall.

Some statistics of our proposed method on this large dataset is
given in Table \ref{tab:huge}. As can be seen, resolving over 3.9
billion records with the proposed method takes around three and a
half hours. Compared to resolving 12 million records in the NCVR
datasets in 307 seconds, our algorithm scales in sublinear time.

Besides Greenplum, we also implement our algorithm with SparkSQL and
resolve the over 3.9 billion entities a server which 4-time as large as the Greenplum server. 
The processing time reduces to 5,044 seconds. Note that the 5,044 seconds include the time of saving output of each step to HDFS for debugging purpose.

\subsection{Practical Considerations}

We now discuss several important practical considerations of our approach.
\medskip

\noindent {\bf Choice of Candidate Signatures:}
As stated earlier, the choice of candidate signatures depends on domain knowledge and has an impact on both the accuracy and computational complexity of the ER algorithm.
Here are some general guidelines on setting this parameter.
\begin{enumerate}\itemsep1mm\parskip0mm
  \item A candidate signature should be short so that it has a good
        chance of recurring in multiple records. 
  \item A candidate signature should be distinctive enough so that
        it has a good chance to be a signature.
  \item All (unambiguous) records should have at least one non-empty signature.
\end{enumerate}

The three guidelines can pull us in opposite directions. 
As can be seen in Section~\ref{sec-experiments}, we usually want to extract small subrecords from key attributes in a record as candidate signatures, but these subrecords may not be sufficiently distinctive on their own.
An effective way to improve the distinctive power of such short candidate signatures
is to concatenate subrecords from multiple attributes, such as using
\emph{name}+\emph{address}, \emph{name}+\emph{phone number}, and so on. 

To the extent possible, we want to make sure each record in the dataset has at least one signature 
by making sure at least one candidate signature with sufficiently high probability can be extracted from each record. 
This is not always possible when there exist inherently ambiguous records like (John, Sydney NSW) that cannot be adequately resolved no matter what.
But there are plenty of interesting cases in the spectrum of distinctiveness that we would need to handle. Examples of difficult cases include names like John James Duncan (all common first names), names from certain ethnicity like Arabic and Vietnamese names, and addresses in certain countries like India.
In such situations, we should take longer candidate signatures into consideration.

When prior knowledge is not available or inadequate, we can generate candidate
signatures randomly. Because of our probabilistic formulation,
randomly generated subrecords are unlikely to cause false links 
but to fully link all relevant records, we may need
to generate a large number of candidate signatures.
In such cases, we may resort to the use of grammars \cite{cohen94,lloyd03logic-learning} to concisely define a search space of candidate signatures that can be  enumerated in a systematic and exhaustive way for testing.

The sensitivity of our ER algorithm to the choice of candidate signatures is both a strength and a weakness.
It is a strength in that when good domain knowledge is present, the candidate signatures provide a natural mechanism to capture and exploit that domain knowledge to achieve good accuracy and scalability. 
Many existing ER algorithms do not have such a natural mechanism to exploit available domain knowledge.
On the other hand, the sensitivity to the choice of candidate signatures is obviously a weakness in ER applications where no good domain knowledge is available, in which case other ``parameter-free" algorithms may be more suitable. \medskip

\noindent {\bf Efficiency Overkill?}
Do we really need an ER algorithm that can process millions of records in a few hours? 
Ideally, data volume at that scale are processed once using a batch algorithm and then an incremental algorithm is used to incorporate new data points as they appear.
In practice, many ER algorithms do not have an incremental version.
Even when they do, the results obtained from the batch and incremental algorithms are usually not perfectly consistent with each other.
In our actual target application, up to 1 million new records are added to the database every day.
Incrementally resolving such large numbers of new records in a manner that maintains consistency with the batch algorithm -- a key requirement in the intelligence domain where analytical results can be used as evidence in court proceedings -- is as hard as the problem of performing batch ER on the new full dataset.
Having an ER algorithm that can be rerun on the full dataset every day in a couple of hours is thus important in our setting.
Further, such an efficient algorithm gives us the agility to make changes and experiment with parameters during the development of the algorithm, something impossible to do if the algorithm take days or weeks to process the entire dataset. \medskip


\noindent {\bf Limitations of P-Signature:}
For efficiency, we choose not to compute all the common subrecords between a pair of records, but to approximate them with a set of precomputed subrecords, typically of limited length. When the precomputed subrecords of a record are all non-distinctive, we will not be able to link this record distinctively to other records of the same entity. To improve the situation, one may consider more diversified and longer candidate signatures at the price of lower efficiency. Besides, the granularity of our token set $W$ also affects how robust our signatures are against inconsistency. Currently words are the finest granularity of our algorithm. That means, we will not be able to link a record if it contains typos in every word. To tackle this challenge, we need to define our vocabulary on q-grams (character substrings of length $q$) or even individual characters instead. Yet in return, the distinctiveness of each candidate-signature will be weaker. The challenge is, following its current design, P-Signature can hardly link Smith with Smithh, but not link Julie with Juliet at the same time.


\section{Related Work and Discussion}\label{sec:rel-work}
We will start with a review of related work followed by a discussion of key connections between our signature ER framework and some existing ER techniques.

\subsubsection*{Related Work in Entity Resolution}
Entity resolution (ER), also known as record linkage and data
matching~\cite{Christen:2012:DMC:2344108}, has a long history with
first computer based techniques being developed over five decades
ago~\cite{fellegi-sunter69,New59}. The major challenges of linkage
quality and scalability have been ongoing as databases continue to
grow in size and complexity, and more diverse databases have to be
linked~\cite{Don15}. ER is a topic of research in a variety of
domains, ranging from computer
science~\cite{Christen:2012:DMC:2344108,Don15,Nau10} and
statistics~\cite{Her07} to the health and social
sciences~\cite{Har15}. While traditionally ER has been applied on
relational databases, more recently the resolution of entities in
Web data~\cite{DBLP:series/synthesis/2015Christophides} has become
an important topic where the aim is to for example facilitate
entity-centric search. The lack of well defined schemas and data
heterogeneity~\cite{Has13}, as well as dynamic data and the sheer size of Web
data, are challenging traditional ER approaches in this
domain~\cite{DBLP:series/synthesis/2015Christophides}.

The general ER process can be viewed to consist of three major
steps~\cite{Christen:2012:DMC:2344108}: blocking/indexing, record
comparison, and classification, which is sometimes followed by a
merging step \cite{Ben09,DBLP:series/synthesis/2015Christophides}
where the records identified to refer to the same entity are
combined into a new, consistent, single record.

In the first step, as discussed earlier, the databases are split
into blocks (or clusters), and in the second step pairs of records
within the same blocks are compared with each other. Even after data
cleansing and standardisation of the
input databases (if applied) there can still be variations of and
errors in the attribute values to be compared, and therefore 
approximate string comparison functions (such as edit distance, the
Jaro-Winkler comparator, or Jaccard 
similarity~\cite{Christen:2012:DMC:2344108,Nau10}) are employed to
compare pairs of records. Each compared record pair results in a
vector of similarities (one similarity per attribute compared), and
these similarity vectors are then used to classify record pairs into
\emph{matches} (where it is assumed the two records in a pair
correspond to the same entity) and \emph{non-matches} (where the
records are assumed to correspond to two different entities).
Various classification methods have been employed in
ER~\cite{Christen:2012:DMC:2344108,Don15,Nau10}, ranging from
simple threshold-based to sophisticated clustering and supervised
classification techniques, as well as active learning approaches.

Traditional blocking~\cite{Chr12b} uses one or more attributes as
\emph{blocking key} to insert records that share the same value in
their blocking key into the same block. Only records within the same
block are then compared with each other. To overcome variations and
misspellings, the attribute values used in blocking keys are often
phonetically encoded using functions such as Soundex or
Double-Metaphone~\cite{Christen:2012:DMC:2344108} which convert a
string into a code according to how the string is pronounced. The
same code is assigned to similar sounding names (such as `Dickson'
and `Dixon'). Multiple blocking keys may also be used to deal with
the problem of missing attribute values~\cite{Chr12b}.

An alternative to traditional blocking is the sorted neighbourhood
method~\cite{Dra12,Mon96,Nau10}, where the databases to be
linked are sorted according to a \emph{sorting key} (usually a
concatenation of the values from several attributes), and a sliding
window is moved over the databases. Only records within the window
are then compared with each other. Another way to block databases is
using canopy clustering~\cite{Mcc00a}, where a computationally
efficient similarity measure (such as Jaccard similarity based on
character q-grams as generated from attribute values~\cite{Nau10})
is used to inserts records into one or more overlapping clusters,
and records that are in the same cluster (block) are then compared
with each other.

While these existing blocking techniques are \emph{schema-based} and
require a user to decide which attributes(s) to use for blocking,
sorting or clustering, more recent work has investigated
\emph{schema-agnostic} approaches that generate some form of
signature for each record automatically from all attribute
values~\cite{DBLP:series/synthesis/2015Christophides, papadakis2015schema,papadakis13tkde,War06}. While schema
agnostic approaches can be attractive as they do not
require manual selection of blocking or sorting keys by domain
experts, they can lead to sub-optimal blocking performance and might
require additional meta-blocking
steps~\cite{DBLP:series/synthesis/2015Christophides,
Efthymiou:2017:PMS:3050918.3050949,papadakis14tkde} to
achieve both high effectiveness and efficiency by for example
removing blocks that are too large or that have a high overlap with
other blocks.

One schema-agnostic approach to blocking is Locality Sensitive
Hashing (LSH), as originally developed for efficient
nearest-neighbour search in high-dimensional spaces~\cite{Ind98}.
LSH has been employed for blocking in ER by hashing attribute
values multiple times and comparing records that share some hash
values. One ER approach based on MinHash~\cite{Bro1997} and LSH is
\emph{HARRA}~\cite{Kim10}, which iteratively blocks, compares, and
then merges records, where merged records are re-hashed to improve
the overall ER quality. 
However, a recent evaluation of blocking
techniques has found~\cite{Ste14}, blocking based on LSH needs to be
carefully tuned to a specific database in order to achieve both high
effectiveness and efficiency. This requires high quality training
data which is not available in many real-world ER applications.


With the increasing sizes of databases to be linked, there have been
various efforts to parallelize ER algorithms, where both
algorithmic~\cite{Kaw06,Kim07} as well as platform dependent
(assuming for example Map Reduce)~\cite{Efthymiou:2017:PMS:3050918.3050949,Kol13} solutions
have been proposed. A major challenge for parallel ER is load
balancing due to the irregular distribution of the data, resulting
for example in blocks of very different sizes.

Compared to existing approaches to ER, the distinguishing feature of our ER algorithm is a data-driven blocking-by-signature technique that deliberately trade-off recall in favour of high precision. 
This is in contrast to the standard practice of trading off precision in favour of high recall in most existing blocking algorithms.
To compensate for potential low-recall resulting from our blocking technique, we introduce an additional Global Connected Component step into the ER process, which turns out to be efficiently computable.
As shown in Section~\ref{sec-experiments}, this slightly unusual combination of ideas yielded a new, simple algorithm that achieves state-of-the-art ER results on a range of datasets, both in terms of accuracy and scalability.

\subsubsection*{Connections to the Signature ER Framework}
Perhaps unsurprisingly, many existing ER techniques can be understood in the signature framework described in Section~\ref{sec:problem formulation}. We now point out a few such connections. \medskip

\noindent {\bf Standardisation and Cleansing:} 
The most common question on our ER algorithm from industry practitioners is the (deliberate) avoidance of an explicit up-front data standardisation and cleansing step. 
We now address this.
The canonical form of each record obtained from standardisation and cleansing is actually a type of signature. 
Whereas the transform from a record to signatures is the generation of subrecords in our algorithm, in traditional methods the transforms are context- and data-dependent and usually implemented using business rules that can become complicated and hard-to-maintain over time. 
The main benefit of using standardisation and cleansing transforms is that the derived canonical form is almost guaranteed to be a signature. 

In our method, by contrast, a derived subrecord is only a signature
with a certain probability.
To compensate, we generate many subrecords for each
database record. 
An important benefit of generating many subrecords or signatures
is that two records will be linked if any of these signatures are
shared. In contrast, data standardisation methods produce only one
signature from each record, and the signature/canonical form for
two records of the same entity may be quite different. 
This issue is then (partially) addressed by allowing the signatures to be matched in a non-exact way using similarity measures that capture different criteria.

To summarise, our method generates low-cost signatures and match signatures exactly. 
We put uncertainty in whether a generated subrecord is a signature and mitigate the
risk with a number of candidate signatures. 
Traditional ER methods that rely on data-standardisation generate high-cost signatures and match signatures approximately.
They put uncertainty in whether two signatures match or not.\medskip

\noindent {\bf MinHash and LSH:}
There are several connections between $\sqsubseteq$-signatures and the signatures generated by MinHash:
\begin{enumerate}
  \item Two records have an identical MinHash band (an array of
    MinHash values) if they both contain some words, and not contain
    some other words, at the same time. In our design, two records
    have an identical candidate $\sqsubseteq$-signature, as long as
    they both contain some words. 
	MinHash signatures
    therefore better fit scenarios where global consistency between
    two records matters, such as Jaccard similarity over large documents~\cite{Christen:2012:DMC:2344108}. Our $\sqsubseteq$-signatures better fits scenarios where partial similarity matters, for example where records of the same entity can contain significant inconsistency.
    
  \item Both method generate multiple signatures from each record. 
    Each MinHash band and candidate signature only captures partial (but important)    information in the original
    records. Therefore both methods allow inconsistent records to
    be linked together. 
    
  \item To achieve good balance between accuracy and efficiency,
    one can vary the length of signatures and the length
    of each band in MinHash. As shown in \cite{Ste14}, finding suitable
    values of these two parameters that lead to high quality ER
    results is context and data-dependent and requires ground truth data to tune. 
    In our method, the choice of candidate signatures
    and probability thresholds are parameters that can be similary tuned to achieve the same balance.
    
  \item Record linkage is probabilistic in both cases. MinHash has a probabilistic explanation with respect to the
    Jaccard similarity between two records~\cite{Bro1997}. Our
    method has a probabilistic explanation with respect to the
    co-occurrence of probable signatures in two records.
\end{enumerate}

\noindent {\bf Optimal Blocking:}
Our blocking approach is also related to the blocking framework described in \cite{papadakis14tkde}.
In particular, our method can be perceived as providing an approximate
way to construct ideal blocks. The blocks generated by our method
always contains only two records, and the number of blocks a record
may appear in is also upper-bounded. These criteria correspond to the
optimal Comparison Cardinality (CC) as discussed by Papadakis et
al.~\cite{papadakis13tkde}.
Papadakis et al. argue the optimal value for CC equals 2, which is when every block contains exactly two records, and each record appears in one and only one block. 
However, high CC is only a necessary but not sufficient condition for high-quality blocking. In practice, the higher CC is, the higher the risk is of missing a true match.
In our algorithm, blocks always contain only two records, but a record can belong to multiple blocks to minimise the risk of missing matches.

\section{Conclusion}\label{sec:conclusion}

We have presented and evaluated in this paper a novel Entity Resolution algorithm that
\begin{itemize}
 \item introduces a data-driven blocking and record-linkage technique based on the probabilistic identification of $\sqsubseteq$-signatures in data;
 \item incorporates an efficient connected-components algorithm to link records \emph{across} blocks;
 \item is scalable and robust against data-quality issues.
\end{itemize}
The simplicity and practicality of the algorithm allows it to be implemented simply on modern parallel databases and deployed easily in large-scale industrial applications, which we have done in the financial intelligence domain.


\balance

\bibliographystyle{abbrv}
\bibliography{mybib}

\end{document}